\newtheorem{definition}{Definition}
\newtheorem{theorem}{Theorem}
\newtheorem{proposition}{Proposition}
\newtheorem{corollary}{Corollary}
\newtheorem{remark}{Remark}
\newtheorem{example}{Example}
\providecommand{\keywords}[1]{\textbf{\textit{Index terms---}} #1}
\begin{document}

\title{New Convolutional Codes Derived from Algebraic Geometry Codes}

\author{Francisco Revson F. Pereira\thanks{Francisco Revson F. Pereira (corresponding author) is with the Department
of Electrical Engineering, Federal University of Campina Grande (UFCG), 58429-900, Campina Grande, PB,
Brazil, e-mail: (francisco.pereira@ee.ufcg.edu.br).},
        Giuliano G. La Guardia\thanks{Giuliano G. La Guardia is with Department of Mathematics and Statistics,
State University of Ponta Grossa (UEPG), 84030-900, Ponta Grossa,
PR, Brazil, e-mail: (gguardia@uepg.br).},\\
        Francisco M. de Assis\thanks{Francisco M. de Assis is with the Department
of Electrical Engineering, Federal University of Campina Grande (UFCG), 58429-900, Campina Grande, PB,
Brazil, e-mail: (fmarcos@dee.ufcg.edu.br).}}

%\thanks{Manuscript received April 19, 2005; revised August 26, 2015.}}
%
%\markboth{Journal of \LaTeX\ Class Files,~Vol.~14, No.~8, August~2015}%
%{Shell \MakeLowercase{\textit{et al.}}: Bare Demo of IEEEtran.cls for IEEE Journals}

\maketitle

\begin{abstract}
In this paper, we construct new families of convolutional codes.
Such codes are obtained by means of algebraic geometry codes.
Additionally, more families of convolutional codes are constructed
by means of puncturing, extending, expanding and by the direct
product code construction applied to algebraic geometry codes. The
parameters of the new convolutional codes are better than or
comparable to the ones available in literature. In particular, a
family of almost near MDS codes is presented.
\end{abstract}

\keywords{convolutional codes, algebraic geometry codes, code construction}

%\IEEEpeerreviewmaketitle

\section{Introduction}
\label{sec:Introduction}

The class of convolutional codes is a class of codes much
investigated in the literature
\cite{Forney:1970,Lee:1976,Piret:1988,Rosenthal:1999,Rosenthal:2001,Luerssen:2008}.
Constructions of convolutional codes with good parameters or even
maximum distance separable (MDS), i.e. optimal, convolutional codes
(in the sense that they attain the generalized Singleton bound
\cite{Rosenthal:1999}) have also been presented in the literature
\cite{Forney:1970,Lee:1976,Piret:1988,Rosenthal:1999,Rosenthal:2001,Luerssen:2008,LaGuardia:2014,LaGuardia:2014A,LaGuardia:2016}.
Rosenthal~\emph{et al.} introduced the generalized Singleton bound
\cite{Rosenthal:1999} (see also \cite{Rosenthal:2001}) in 1999.

In this paper, we construct several new families of unit-memory
convolutional codes derived from classical algebraic geometry (AG)
codes. To do this, we apply the method introduced by Piret
\cite{Piret:1988} which was generalized by Aly~\emph{et al.}
\cite{Aly:2007}. Additionally, we utilize the techniques of code
expansion, puncturing, extension and the product code construction
in order to obtain more families of convolutional codes. An
advantage of our constructions lies in the fact that the new
convolutional codes are generated algebraically and not by
computational search. Moreover, since there exist classical AG codes
with good parameters, our new convolutional codes also have good
parameters. The class of AG codes was introduced by Goppa
\cite{Goppa:1981} in 1981. These codes have nice properties and are
asymptotically good. There exist several works dealing with
investigations concerning algebraic geometry (AG) codes
\cite{Stichtenoth:2002,Jin:2010,Jin:2012,Jin:2014,Munuera:2016}.
However, only few papers \cite{Perez:2004,Martin:2013,Curto:2012}
address the construction of convolutional codes by applying AG codes
as their classical counterpart.

A natural question that can arise is as follows: why it is important
to obtain convolutional codes which are not MDS, since there exist
MDS codes? The answer is simple: MDS codes is known to exist for
specific code lengths constructed over specific alphabets. For
example, in
Refs.~\cite{LaGuardia:2014,LaGuardia:2014A,LaGuardia:2016}, one has
convolutional MDS codes of length $n=q+1$ or $n=\frac{(q+1)}{2}$ (in
the last case, $q\equiv 3 \operatorname{mod} 4$) over
$\mathbb{F}_q$. In Ref.~\cite{Gluesing:2006}, most of the codes are
constructed over large alphabets when compared to its code length.
Other example is Ref.~\cite{Klapp:2007}, where convolutional MDS
codes over $\mathbb{F}_q$ with code length $n| (q^2 - 1)$ and $q+1 <
n \leq q^2 - 1$ were constructed.

The paper is organized as follows. In
Section~\ref{sec:convolutionalCodes}, we review basic concepts on
convolutional codes. In Section~\ref{sec:AGCodes}, a review of
concepts concerning algebraic geometry codes is given. In
Section~\ref{sec:convAGCodes}, we propose constructions of new
families of convolutional codes. In particular, a family of almost
near MDS (or near MDS or MDS) convolutional codes is shown. In
Section~\ref{sec:codesComp}, we compare the new code parameters with
the ones shown in the literature. Finally, in
Section~\ref{finalrem}, the final remarks are drawn.

\section{Review of Convolutional Codes}
\label{sec:convolutionalCodes}
In this section we present a brief review of classical convolutional
codes. For more details we refer the reader to
\cite{Forney:1970,Piret:1988,Johannesson:1999,Rosenthal:2001,Huffman:2003,Aly:2007,Klapp:2007,Luerssen:2008,LaGuardia:2014}.

\emph{Notation.} Throughout this paper, $p$ denotes a prime number,
$q$ is a prime power, $\mathbb{F}_{q}$ is the finite field with $q$
elements and $F/\mathbb{F}_q$ is an algebraic functions field over
$\mathbb{F}_q$ of genus $g$.

We begin with a few usual definitions used in the theory of
convolutional codes. A polynomial encoder matrix $G(D) \in
\mathbb{F}_{q}{[D]}^{k \times n}$ is called \emph{basic} if exists a
polynomial right inverse for $G(D)$. A minimal-basic generator
matrix is a encoder matrix which the overall constraint length
$\gamma =\displaystyle\sum_{i=1}^{k} {\gamma}_i$ has the smallest
value among all basic generator matrices (in this case, the overall
constraint length $\gamma$ is called the \emph{degree} of the
corresponding code).

\begin{definition}\cite{Klapp:2007}
A rate $k/n$ \emph{convolutional code} $C$ with parameters $(n, k,
\gamma ;$ $m, d_{f} {)}_{q}$ is a submodule of $\mathbb{F}_q
{[D]}^{n}$ generated by a reduced basic matrix $G(D)=(g_{ij}) \in
\mathbb{F}_q {[D]}^{k \times n}$, that is, $C = \{ {\bf u}(D)G(D) |
{\bf u}(D)\in \mathbb{F}_{q} {[D]}^{k} \}$, where $n$ is the length,
$k$ is the dimension, $\gamma =\displaystyle\sum_{i=1}^{k}
{\gamma}_i$ is the \emph{degree}, where ${\gamma}_i = {\max}_{1\leq
j \leq n}$ $\{ \deg g_{ij} \}$, $m = {\max}_{1\leq i\leq
k}\{{\gamma}_i\}$ is the \emph{memory} and $d_{f}=$wt$(C)=\min
\{wt({\bf v}(D)) \mid {\bf v}(D) \in C, {\bf v}(D)\neq 0 \}$ is the
\emph{free distance} of the code.
\end{definition}

A generator matrix $G(D)$ is called \emph{catastrophic} if there
exists a information sequence ${\bf u}{(D)}^{k}\in
\mathbb{F}_{q}{((D))}^{k}$ of infinite Hamming weight such that
results in a codeword ${\bf v}{(D)}^{k} = {\bf u}{(D)}^{k}G(D)$ with
finite Hamming weight. Since a basic generator matrix is
non-catastrophic, the convolutional codes constructed in this paper
have non catastrophic generator matrices.

The Euclidean inner product of two vectors ${\bf u}(D) = {\sum}_i
{\bf u}_i D^i$ and ${\bf v}(D) = {\sum}_j {\bf u}_j D^j$ in
$\mathbb{F}_q {[D]}^{n}$ is defined as $\langle {\bf u}(D)\mid {\bf
v}(D)\rangle = {\sum}_i {\bf u}_i \cdot {\bf v}_i$. For a
convolutional code $C$, the Euclidean dual of $C$ is defined by
$C^{\perp }=\{ {\bf u}(D) \in \mathbb{F}_q {[D]}^{n}\mid \langle
{\bf u}(D)\mid {\bf v}(D)\rangle = 0$ for all ${\bf v}(D)\in C\}$.

Let ${[n, k, d]}_{q}$ be a linear code with parity check matrix $H$.
One first splits $H$ into $m+1$ disjoint submatrices $H_i$ such that
\begin{eqnarray}
H = \left[
\begin{array}{c}
H_0\\
H_1\\
\vdots\\
H_{m}\\
\end{array}
\right].
\end{eqnarray}
After this, we consider the polynomial generator matrix given by
\begin{eqnarray}
G(D) =  {\tilde H}_0 + {\tilde H}_1 D + {\tilde H}_2 D^2 + \ldots +
{\tilde H}_m D^m,
\end{eqnarray}
where the matrices ${\tilde H}_i$, for all $1\leq i\leq m$, are
derived from the respective matrices $H_i$ by adding zero-rows at
the bottom in such a way that the matrix ${\tilde H}_i$ has $\kappa$
rows in total, where $\kappa$ is the maximal number of rows among
all the matrices $H_i$. The matrix $G(D)$ generates a convolutional
code with memory $m$.

\begin{theorem}\cite[Theorem 3]{Aly:2007}\label{A}
Let $C \subseteq \mathbb{F}_q^n$ be a linear code with parameters
${[n, k, d]}_{q}$. Assume also that $H \in \mathbb{F}_q^{(n-k)\times
n}$ is a parity check matrix for $C$ partitioned into submatrices
$H_0, H_1, \ldots, H_m$ as in Eq.~(1) such that $\kappa =$ rk$H_0$
and rk$H_i \leq \kappa$ for $1 \leq i\leq m$ and consider the
polynomial matrix $G(D)$ as in Eq.~(2). Then the matrix $G(D)$ is a
reduced basic generator matrix. Additionally, if $d_f$ denotes the
free distances of the convolutional code $V$ generated by $G(D)$,
and $d^{\perp}$ is the minimum distance of $C^{\perp}$, then one has
$d_f \geq d^{\perp}$.
\end{theorem}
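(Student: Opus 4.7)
I would attack the two claims separately: the free-distance bound $d_f \ge d^\perp$, which I expect to come out essentially for free from the block structure, and the assertion that $G(D)$ is reduced and basic, which is where the technical effort concentrates.

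For the distance bound, I take an arbitrary nonzero $v(D) = u(D) G(D)$ with $u(D) = \sum_j u_j D^j \in \mathbb{F}_q[D]^\kappa$ and expand the convolution to obtain $v(D) = \sum_t v_t D^t$, where $v_t = \sum_{i+j=t} u_j \tilde H_i$. Every row of every $\tilde H_i$ is either zero or a row of the full parity-check matrix $H$, so each coefficient $v_t$ lies in the row space $C^\perp$. Since $v(D) \neq 0$, some $v_{t_0}$ is a nonzero codeword of $C^\perp$ and therefore has Hamming weight at least $d^\perp$; summing coefficient-wise, $\mathrm{wt}(v(D)) \ge \mathrm{wt}(v_{t_0}) \ge d^\perp$, yielding $d_f \ge d^\perp$.

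For the reduced-basic property, the key handle is that $G(0) = \tilde H_0 = H_0$ has rank $\kappa$. This gives an invertible $\kappa \times \kappa$ submatrix $M$ of $H_0$ at some columns $j_1, \ldots, j_\kappa$; the corresponding $\kappa$-minor of $G(D)$ is a polynomial with nonzero constant term $\det M$, so $D$ does not divide the gcd of all $\kappa$-minors. Combined with the fact that $H$ itself has full row rank $n-k$, one then exhibits further $\kappa$-minors drawn from different blocks $H_0, \ldots, H_m$ whose gcd with the first minor is forced to be a unit, giving a polynomial right inverse and hence basicness. For the reduced condition, the row degree $\gamma_i$ is the largest index $\ell$ for which the $i$-th row of $\tilde H_\ell$ is nonzero, and the padding convention together with $\mathrm{rk}\, H_i \le \kappa$ is what forces the matrix of leading row coefficients of $G(D)$ to have rank $\kappa$, which is the standard row-reducedness test.

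The hardest step I foresee is upgrading ``the gcd of $\kappa$-minors is coprime to $D$'' to ``the gcd is a unit''. The invertible submatrix in $H_0$ alone only excludes $D$ as a common factor, and to rule out every irreducible $f(D) \in \mathbb{F}_q[D]$ one must exploit the global full-row-rankness of $H$ and track how rows from $H_1, \ldots, H_m$ interact with the columns picked out by $M$. This bookkeeping, modelled on Piret's construction and its extension by Aly et al., is where I would expect to spend the most time; once it is settled, the row-reducedness assertion and the distance bound should slot in immediately.
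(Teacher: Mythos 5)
The paper gives no proof of this statement: it is quoted from Aly \emph{et al.}\ (Theorem 3 of the cited reference), so there is no in-paper argument to compare against, and your proposal has to stand on its own. Your free-distance argument does: every coefficient $v_t=\sum_{i+j=t}u_j\tilde H_i$ of $v(D)=u(D)G(D)$ is an $\mathbb{F}_q$-linear combination of rows of $H$, hence lies in $C^{\perp}$, and a nonzero coefficient already contributes weight at least $d^{\perp}$ to $\mathrm{wt}(v(D))=\sum_t\mathrm{wt}(v_t)$. That part is complete and is the standard argument.

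The gap is in the basicness claim. You correctly note that $G(0)=H_0$ having rank $\kappa$ only excludes $D$ from the gcd of the $\kappa\times\kappa$ minors, and you then defer the step of excluding every other irreducible factor to unspecified bookkeeping with ``further minors from different blocks.'' That deferred step is precisely the content of the theorem, and as described it is not a proof: choosing minors from different blocks does not by itself control their gcd. The clean way to close it is to drop the minor bookkeeping entirely and use the equivalent criterion that $G(D)$ is basic iff $G(\alpha)$ has rank $\kappa$ for every $\alpha$ in the algebraic closure $\overline{\mathbb{F}}_q$. If $uG(\alpha)=0$ for some nonzero $u\in\overline{\mathbb{F}}_q^{\,\kappa}$, then $\sum_{i}\alpha^{i}(u\tilde H_i)=0$ expresses $0$ as a linear combination of rows of $H$; since $H_0,\dots,H_m$ partition the rows of $H$ and $H$ has full row rank $n-k$ (so its rows remain independent over $\overline{\mathbb{F}}_q$), every coefficient vanishes, and the coefficients attached to the rows of $H_0$ are exactly the entries of $u$, forcing $u=0$. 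The same linear-independence fact --- not, as you suggest, the hypothesis $\mathrm{rk}\,H_i\le\kappa$ --- is what gives reducedness: the leading coefficient of each row of $G(D)$ is a row of some $H_i$, these $\kappa$ leading rows are pairwise distinct rows of $H$ and hence linearly independent, so the leading-row-coefficient matrix has full rank $\kappa$. With that substitution your outline becomes a complete proof, and the step you flagged as hardest in fact disappears.
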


To finish this section, we recall the generalized Singleton bound
\cite{Rosenthal:2001} of an $(n, k, \gamma ; m, d_{f} {)}_{q}$
convolutional code, which says that the free distance is upper
bounded by $d_{f}\leq (n-k)[ \lfloor \gamma/k \rfloor +1 ] + \gamma
+1$.

\section{Review of Algebraic Geometry Codes}
\label{sec:AGCodes}
In this section, we introduce some basic notation and results of
algebraic geometry codes. For more details, the reader can see
\cite{Stichtenoth:2009,Tsfasman:2007}.

Let $F/\mathbb{F}_q$ be a algebraic functions field of genus $g$. A place $P$ of $F/\mathbb{F}_q$
is the maximal ideal of some valuation ring $\mathcal{O}$ of $F/\mathbb{F}_q$. We also define
$\mathbb{P}_F := \{P | P\textit{ is a place of }F/\mathbb{F}_q\}$. A divisor of $F/\mathbb{F}_q$ is a
formal sum of places given by
$D := \sum_{P\in \mathbb{P}_F} n_P P, \textit{ with }n_P\in\mathbb{Z}, \textit{ almost all } n_P=0$.
The support of $D$ is defined as $suppD:=\{P\in \mathbb{P}_F|n_p\neq 0\}$. The discrete valuation
corresponding to a place $P$ is written as $\nu_P$. For every element $x$ of $F/\mathbb{F}_q$, we can
define a principal divisor of $x$ by $(x) := \sum_P\nu_P (x) P$. For a given divisor $G$, we denote the
Riemann-Roch space associated to $G$ by $\mathcal{L}(G) = \{x \in F/K \setminus \{0\}| (x) \geq -G\}$.

Let $\Omega_F := \{\omega|\omega \textit{ is a Weil differential of
}F/K\}$ be the differential space of $F/\mathbb{F}_q$. Given a
nonzero differential $w$, we denote by $(\omega):=\sum_P \nu_P(w) P$
the canonical divisor. All canonical divisor are equivalent and have
degree equal to $2g-2$. Furthermore, for a divisor $A$ we define
$\Omega_F(G) := \{\omega \in \Omega_F| \omega = 0 \textit{ or }
(\omega)\geq G\}$, and denote its dimension by $i(G)$.

\begin{theorem}(Riemann-Roch Theorem)\cite[Theorem 1.5.15, pg 30]{Stichtenoth:2009}
Let $W$ be a canonical divisor of $F/K$. Then for each divisor $G$,
the dimension of $\mathcal{L}(G)$ is given by $\ell(G) =
\operatorname{deg} (G) + 1 - g + \ell(W - G),$ where $W$ is a
canonical divisor.
\end{theorem}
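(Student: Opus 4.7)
The plan is to follow the standard treatment via adeles (repartitions) and Weil differentials, as carried out in Stichtenoth. First I would introduce the adele ring $\mathcal{A}_F$ of $F/\mathbb{F}_q$ together with the subspaces $\mathcal{A}_F(G) = \{\alpha \in \mathcal{A}_F \mid \nu_P(\alpha_P) \geq -\nu_P(G) \text{ for all } P \in \mathbb{P}_F\}$, and the diagonal embedding $F \hookrightarrow \mathcal{A}_F$. A direct verification using the definition of $\mathcal{L}(G)$ yields $F \cap \mathcal{A}_F(G) = \mathcal{L}(G)$. For any two divisors $G_1 \leq G_2$, the quotient $\mathcal{A}_F(G_2)/\mathcal{A}_F(G_1)$ has $\mathbb{F}_q$-dimension $\deg G_2 - \deg G_1$, purely from the local definition.

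Next I would establish \emph{Riemann's theorem}: there is a non-negative integer $g$ (the genus) such that for every divisor $G$, $\ell(G) \geq \deg(G) + 1 - g$, with equality whenever $\deg(G)$ is sufficiently large. The crucial ingredient is to show that the cokernel $i(G) := \dim_{\mathbb{F}_q} \mathcal{A}_F / (F + \mathcal{A}_F(G))$ is finite, after which a dimension chase using the filtration pieces $\mathcal{A}_F(G_2)/\mathcal{A}_F(G_1)$ together with the identity $F \cap \mathcal{A}_F(G) = \mathcal{L}(G)$ delivers the sharpened identity $\ell(G) - \deg(G) - 1 + g = i(G)$, where $i(G)$ is called the index of speciality.

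The heart of the proof, and the main obstacle, is the identification $i(G) = \ell(W - G)$. I would define a Weil differential as an $\mathbb{F}_q$-linear map $\omega: \mathcal{A}_F \to \mathbb{F}_q$ that vanishes on $F + \mathcal{A}_F(G)$ for some divisor $G$, and set $\Omega_F(G)$ to be the subspace of those vanishing on $F + \mathcal{A}_F(G)$. By linear-algebraic duality, $\Omega_F(G) \cong (\mathcal{A}_F/(F + \mathcal{A}_F(G)))^{*}$, so $\dim \Omega_F(G) = i(G)$. The technical core is then to prove that $\Omega_F$ is a one-dimensional $F$-module: fixing any nonzero $\omega \in \Omega_F$, every Weil differential is uniquely of the form $x\omega$ with $x \in F$. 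This step relies on the strong approximation theorem and a careful counting argument comparing $\dim \Omega_F(G)$ with the dimension of $\{x\omega \mid x \in \mathcal{L}(W - G)\}$ for divisors $G$ of very negative degree. Once one-dimensionality is in hand, defining $W := (\omega)$ and checking that the map $\mathcal{L}(W - G) \to \Omega_F(G)$ given by $x \mapsto x\omega$ is an $\mathbb{F}_q$-linear isomorphism yields $i(G) = \ell(W - G)$. Substituting this into Riemann's theorem produces the Riemann-Roch formula $\ell(G) = \deg(G) + 1 - g + \ell(W - G)$, as required.
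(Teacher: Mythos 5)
The paper does not prove this statement at all: it is quoted verbatim as background from the cited reference (Stichtenoth, Theorem~1.5.15), so there is no in-paper proof to compare against. Your outline faithfully reproduces the standard adelic argument of that reference --- adele spaces $\mathcal{A}_F(G)$, Riemann's theorem with the index of speciality $i(G)$, duality with Weil differentials, one-dimensionality of $\Omega_F$ over $F$, and the identification $i(G)=\ell(W-G)$ --- and the logical chain is correct, with the genuinely hard steps (finiteness of $i(G)$ and the one-dimensionality of $\Omega_F$) correctly identified even though they are only named rather than carried out.
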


Let $P_1, \ldots, P_n$ be pairwise distinct places of
$F/\mathbb{F}_q$ of degree $1$ and $D = P_1 + \ldots + P_n$. Choose
a divisor $G$ of $F/\mathbb{F}_q$ such that $supp G\cap supp D =
\varnothing$. Then one has:

\begin{definition}\cite[Definition 2.2.1, pg 48]{Stichtenoth:2009}
The algebraic geometry (AG) code $C_{\mathcal{L}}(D, G)$ associated
with the divisors $D$ and $G$ is defined as $C_{\mathcal{L}}(D, G)
:= \{(x(P_1), \ldots, x(P_n))| x\in \mathcal{L}(G)\}.$
\end{definition}

\begin{proposition}\cite[Corollary 2.2.3, pg
49]{Stichtenoth:2009}\label{AG1} Let $F/\mathbb{F}_q$ be a function
field of genus $g$. Then the AG code $C_{\mathcal{L}}(D, G)$ is an
$[n, k, d]$-linear code over $\mathbb{F}_q$ with parameters $k =
\ell(G) - \ell(G-D)\textit{ and } d\geq n- \operatorname{deg} (G)$.
If $2g - 2 < \operatorname{deg}(G) < n$, then $k =
\operatorname{deg}(G) - g + 1$. If $\{x_1, \ldots, x_k\}$ is a basis
of $\mathcal{L}(G)$, then a generator matrix of $C_{\mathcal{L}}(D,
G)$ is given by
\begin{equation}
G_{\mathcal{L}} =
\begin{bmatrix}\
    x_1(P_1) & x_1(P_2) & \cdots & x_1(P_n) \\
    x_2(P_1) & x_2(P_2) & \cdots & x_2(P_n) \\
    \vdots  & \vdots  & \ddots & \vdots  \\
    x_k(P_1) & x_k(P_2) & \cdots & x_k(P_n)
\end{bmatrix}.
\end{equation}
\end{proposition}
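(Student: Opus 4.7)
The central object of the proof is the evaluation map
\[
\operatorname{ev}_D: \mathcal{L}(G) \longrightarrow \mathbb{F}_q^n, \quad x \longmapsto (x(P_1),\ldots,x(P_n)),
\]
which is $\mathbb{F}_q$-linear and has image exactly $C_{\mathcal{L}}(D,G)$ by definition. Hence the code is automatically a linear subspace of $\mathbb{F}_q^n$, and everything reduces to analyzing $\operatorname{ev}_D$. The plan is to (i) compute the kernel of $\operatorname{ev}_D$ in order to obtain the dimension via rank-nullity, (ii) bound the minimum distance by translating ``many zero coordinates'' into a divisor-theoretic statement, and (iii) specialize using Riemann--Roch under the hypothesis $2g-2 < \deg(G) < n$.

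For the kernel, I would use the evaluation hypothesis $\operatorname{supp}(G)\cap\operatorname{supp}(D)=\varnothing$: for each $P_i$, since $P_i\notin\operatorname{supp}(G)$ we have $\nu_{P_i}(x)\geq 0$ for every $x\in\mathcal{L}(G)$, and $x(P_i)=0$ is equivalent to $\nu_{P_i}(x)\geq 1$. Combining these conditions for all $i=1,\ldots,n$ with $(x)\geq -G$ yields $(x)\geq -G+D$, i.e.\ $\ker(\operatorname{ev}_D)=\mathcal{L}(G-D)$. Rank-nullity then delivers $k=\ell(G)-\ell(G-D)$, which is the first claim.

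For the minimum distance, let $\mathbf{c}=\operatorname{ev}_D(x)\ne 0$ be a codeword of weight $w$, and let $i_1,\ldots,i_{n-w}$ index its zero positions. The same valuation argument as above gives $x\in\mathcal{L}(G-P_{i_1}-\cdots-P_{i_{n-w}})$, a nonzero element. Since a Riemann--Roch space of a divisor of negative degree is trivial, we must have $\deg(G)-(n-w)\geq 0$, that is $w\geq n-\deg(G)$; this proves $d\geq n-\deg(G)$. Under the extra hypothesis $2g-2<\deg(G)<n$, two applications of Riemann--Roch/negative-degree vanishing finish the job: $\deg(G-D)=\deg(G)-n<0$ forces $\ell(G-D)=0$, and $\deg(W-G)<0$ forces $\ell(W-G)=0$, so Theorem~2 yields $\ell(G)=\deg(G)-g+1$. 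Combining, $k=\deg(G)-g+1$.

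Finally, for the generator matrix, the rows of the displayed $G_{\mathcal{L}}$ are precisely $\operatorname{ev}_D(x_1),\ldots,\operatorname{ev}_D(x_k)$; since $\operatorname{ev}_D$ is surjective onto the code, they span $C_{\mathcal{L}}(D,G)$, and when $\ell(G-D)=0$ (in particular under the Riemann--Roch hypothesis) the map $\operatorname{ev}_D$ is injective, so these $k$ rows are linearly independent. The main conceptual obstacle is the kernel identification step: one has to check carefully that ``no pole at $P_i$'' plus ``value zero at $P_i$'' is exactly $\nu_{P_i}(x)\geq 1$ and that this condition glues correctly with $(x)\geq -G$ to produce $(x)\geq -G+D$; everything else is bookkeeping with Riemann--Roch and the standard fact that divisors of negative degree have trivial Riemann--Roch space.
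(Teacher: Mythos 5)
The paper does not prove this proposition at all; it is quoted verbatim from the cited source (Stichtenoth, Corollary 2.2.3), so there is no in-paper argument to compare against. Your proof is correct and is precisely the standard one from that source: identify $\ker(\operatorname{ev}_D)=\mathcal{L}(G-D)$ via the disjoint-support hypothesis, apply rank--nullity, bound the weight by observing that a codeword with $n-w$ zeros yields a nonzero element of $\mathcal{L}(G-P_{i_1}-\cdots-P_{i_{n-w}})$ and hence a divisor of nonnegative degree, and then kill $\ell(G-D)$ and $\ell(W-G)$ by the negative-degree vanishing to extract $k=\deg(G)-g+1$ from Riemann--Roch.
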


\begin{definition}\cite[Definition 2.2.6, pg 51]{Stichtenoth:2009}
Let $G$ and $D=P_1 + \ldots + P_n$ be divisors as before. Then we
define the code by $C_\Omega (D, G) :=
\{(\textit{resp}_{P_1}(\omega), \ldots,$
$\textit{resp}_{P_n}(\omega)| \omega\in \Omega_F(G - D)\}$, where
$\textit{resp}_{P_i}(\omega)$ denotes the residue of $\omega$ at
$P_i$.
\end{definition}

\begin{proposition}\cite[Theorem 2.2.7, pg 51]{Stichtenoth:2009}\label{AG2}
Let $F/\mathbb{F}_q$ be a function field of genus $g$. Let $G$ and
$D=P_1+\ldots+P_n$ be divisors as before. If $2g - 2 <
\operatorname{deg}(G) < n$, then $C_\Omega (D, G)$ is an $[n, k',
d']$-linear code over $\mathbb{F}_q$, where $k' =
n+g-1-\operatorname{deg}(G)$ and $d' \geq \operatorname{deg}(G) -
(2g -2)$.
\end{proposition}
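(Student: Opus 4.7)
The plan is to analyze the residue map $\rho\colon\Omega_F(G-D)\to\mathbb{F}_q^n$ sending $\omega\mapsto(\operatorname{res}_{P_1}(\omega),\ldots,\operatorname{res}_{P_n}(\omega))$, whose image is by definition $C_\Omega(D,G)$. The length $n$ is immediate from the construction, so it suffices to compute $\dim\rho(\Omega_F(G-D))$ and to lower-bound the Hamming weights of its nonzero elements.

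For the dimension I would first evaluate $\dim\Omega_F(G-D)=i(G-D)$. Using the standard identification $i(A)=\ell(W-A)$ (implicit in the Riemann-Roch Theorem above, with $W$ any canonical divisor of degree $2g-2$), and applying Riemann-Roch to $W-G+D$, I get $\ell(W-G+D)=\deg(W-G+D)+1-g+\ell(G-D)$. Since $\deg(G-D)=\deg(G)-n<0$, we have $\ell(G-D)=0$, and the computation collapses to $\dim\Omega_F(G-D)=n+g-1-\deg(G)$. Next I would show $\rho$ is injective: if $\operatorname{res}_{P_i}(\omega)=0$ for every $i$, then, since $\operatorname{supp} G\cap\operatorname{supp} D=\varnothing$, the vanishing of the residue at each simple pole $P_i$ allowed by $\omega\in\Omega_F(G-D)$ upgrades $\nu_{P_i}(\omega)\geq-1$ to $\nu_{P_i}(\omega)\geq 0$, so $(\omega)\geq G$ and $\omega\in\Omega_F(G)$. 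By the hypothesis $\deg(G)>2g-2$ one has $\deg(W-G)<0$, hence $\ell(W-G)=0$ and $\dim\Omega_F(G)=0$, forcing $\omega=0$. Therefore $k'=\dim\Omega_F(G-D)=n+g-1-\deg(G)$.

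For the distance bound I would take a nonzero codeword of weight $w$, represented by some $\omega\in\Omega_F(G-D)$, and set $D':=\sum_{i:\operatorname{res}_{P_i}(\omega)\neq 0}P_i$, a divisor of degree $w$ whose support is contained in that of $D$. The same residue-vanishing argument applied at each $P_i\in\operatorname{supp} D\setminus\operatorname{supp} D'$ sharpens the valuation bound there by one, while at the remaining places nothing changes; assembling the local inequalities yields $(\omega)\geq G-D'$. Taking degrees gives $2g-2=\deg((\omega))\geq\deg(G)-w$, whence $w\geq\deg(G)-(2g-2)$, which is the claimed lower bound on $d'$.

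The main technical obstacle is the injectivity of $\rho$: it hinges on the strict inequality $2g-2<\deg(G)$, which is exactly what forces $\ell(W-G)=0$. It is also the step that makes the dimension of $C_\Omega(D,G)$ agree with $\dim\Omega_F(G-D)$ and thereby lets the same residue-vanishing argument be redeployed to control the minimum weight; everything else is divisor bookkeeping together with a single application of Riemann-Roch for each of the two computations.
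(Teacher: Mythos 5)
Your argument is correct and is essentially the standard proof of this result as given in the cited reference (Stichtenoth, Theorem 2.2.7); the paper itself states the proposition by citation and supplies no proof of its own. The residue map, the Riemann--Roch computation of $i(G-D)$, the injectivity via $\ell(W-G)=0$, and the weight bound via $(\omega)\geq G-D'$ are exactly the ingredients of that standard argument, and each step checks out.
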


The relationship between the codes $C_{\mathcal{L}}(D,G)$ and $C_\Omega (D,G)$ is given in the next proposition.

\begin{proposition}\cite[Propositions 2.2.10 and 2.2.11, pg 54]{Stichtenoth:2009}
Let $\eta$ be a Weil differential such that $\nu_{P_i}(\eta) = -1$ and $\eta_{P_i} = 1$
for all $i=1, \ldots, n$. Then
$C_{\mathcal{L}} (D,G)^\bot = C_{\Omega} (D, G) = C_{\mathcal{L}} (D, D - G + (\eta))$,
where $C_{\mathcal{L}} (D, G)^\bot$ is the Euclidean dual of $C_{\mathcal{L}} (D, G)$.
\end{proposition}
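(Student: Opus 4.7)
The statement splits naturally into two equalities, and I would prove them in sequence.

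For the first equality $C_{\mathcal{L}}(D,G)^\bot = C_\Omega(D,G)$, the plan is to first establish the inclusion $C_\Omega(D,G) \subseteq C_{\mathcal{L}}(D,G)^\bot$ and then upgrade it to an equality by a dimension count. The inclusion is the heart of this part and rests on the residue theorem: for $x \in \mathcal{L}(G)$ and $\omega \in \Omega_F(G-D)$, the product differential $x\omega$ satisfies $(x\omega) = (x) + (\omega) \geq -G + (G - D) = -D$, so its only possible poles are simple poles located at $P_1, \ldots, P_n$. Hence the global relation $\sum_{P \in \mathbb{P}_F} \operatorname{res}_P(x\omega) = 0$ collapses to $\sum_{i=1}^n \operatorname{res}_{P_i}(x\omega) = 0$, and since each $P_i$ has degree one and the pole is simple, $\operatorname{res}_{P_i}(x\omega) = x(P_i)\,\operatorname{res}_{P_i}(\omega)$. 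This is precisely the Euclidean pairing of the generating codewords. Equality then follows from Propositions~\ref{AG1} and~\ref{AG2}, since $\dim C_{\mathcal{L}}(D,G) + \dim C_\Omega(D,G) = (\deg G - g + 1) + (n + g - 1 - \deg G) = n$.

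For the second equality $C_\Omega(D,G) = C_{\mathcal{L}}(D, D-G+(\eta))$, I would construct an explicit $\mathbb{F}_q$-linear isomorphism $\phi \colon \mathcal{L}(D - G + (\eta)) \to \Omega_F(G - D)$ by $\phi(x) = x\eta$. Well-definedness is a direct divisor computation: $(x) \geq G - D - (\eta)$ yields $(x\eta) \geq G - D$. The map is an isomorphism because multiplication by the fixed nonzero differential $\eta$ is always a bijection of the form $\mathcal{L}(A) \to \Omega_F((\eta) - A)$. The decisive step is that $\phi$ intertwines the two defining maps of the codes, namely $\operatorname{res}_{P_i}(x\eta) = x(P_i)$ for every $i$. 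This follows from the hypotheses that $\eta$ has a simple pole at $P_i$ with normalized residue equal to $1$, together with the fact that $x$ is regular at $P_i$ (since $\operatorname{supp} G$ is disjoint from $\operatorname{supp} D$, one has $\nu_{P_i}(x) \geq 0$). Hence the $n$-tuples produced on both sides of the equality agree componentwise, and the codes coincide.

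The main obstacle is the pointwise identity $\operatorname{res}_{P_i}(x\eta) = x(P_i)$: one has to extract the correct coefficient from the Laurent expansion of $x\eta$ at each $P_i$ and verify that the normalization $\eta_{P_i} = 1$ given in the hypothesis is precisely the residue normalization required by this identification. Once this local fact is secured, the rest of the argument is formal divisor bookkeeping. I would also observe that the two equalities combined give the standard self-dual description of AG codes and allow one to represent duals of $C_{\mathcal{L}}$-type codes again as $C_{\mathcal{L}}$-type codes, which is exactly the format needed to apply Theorem~\ref{A} in the convolutional constructions of Section~\ref{sec:convAGCodes}.
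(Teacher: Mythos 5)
Your argument is correct and is essentially the standard proof of the result the paper quotes without proof from Stichtenoth (the residue-theorem pairing plus a dimension count for $C_{\mathcal{L}}(D,G)^{\bot}=C_{\Omega}(D,G)$, and the isomorphism $x\mapsto x\eta$ of $\mathcal{L}(D-G+(\eta))$ onto $\Omega_F(G-D)$ together with the local identity $\operatorname{res}_{P_i}(x\eta)=x(P_i)$ for the second equality). The only minor caveat is that your dimension count uses the formulas $k=\deg G-g+1$ and $k'=n+g-1-\deg G$, which presuppose $2g-2<\deg G<n$; the general case follows identically from $\dim C_{\mathcal{L}}(D,G)=\ell(G)-\ell(G-D)$, $\dim C_{\Omega}(D,G)=i(G-D)-i(G)$ and Riemann--Roch, and in any event the degree hypothesis is in force in every application made in Section~\ref{sec:convAGCodes}.
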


%The End

\section{New Convolutional AG Codes}
\label{sec:convAGCodes}
In this section we present a general method to construct
convolutional codes from AG codes. More precisely, we obtain
convolutional codes whose generator matrix is derived from the AG
code $C_{\Omega}(D, G)$. We adopt the notation given in the last
section.

Our first result is given in the following:

\begin{theorem}\label{Theo1}
Let $F/\mathbb {F}_q$ be a function field of genus $g$. Consider the
AG code $C_{\Omega}(D, G)$ with $2g-2 < \operatorname{deg}(G) < n$,
where $\operatorname{deg}(G)$ is the degree of the divisor $G$. Then
there exists a unit-memory convolutional code with parameters $(n,
k-l, l; 1, d_f \geq d)_q$, where $l \leq k/2$,
$k=\operatorname{deg}(G) +1 - g $  and $d\geq
n-\operatorname{deg}(G)$.
\end{theorem}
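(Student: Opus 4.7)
The plan is to apply Theorem~\ref{A} to the AG code $C_{\Omega}(D,G)$ rather than to $C_{\mathcal{L}}(D,G)$ itself. This is the natural choice because Theorem~\ref{A} yields a free-distance bound in terms of the minimum distance of the \emph{dual} of the classical code, and the duality proposition gives $(C_{\Omega}(D,G))^{\perp} = C_{\mathcal{L}}(D,G)$. The hypothesis $2g-2<\operatorname{deg}(G)<n$, together with Proposition~\ref{AG1}, then supplies exactly what is required, namely $\dim C_{\mathcal{L}}(D,G) = \operatorname{deg}(G)+1-g = k$ and $d(C_{\mathcal{L}}(D,G)) \geq n - \operatorname{deg}(G) = d$.

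Concretely, the generator matrix $G_{\mathcal{L}}$ produced by Proposition~\ref{AG1} serves as a parity check matrix $H$ of $C_{\Omega}(D,G)$, and its $k$ rows are linearly independent. I would split $H$ into two blocks by taking $H_0$ to consist of the top $k-l$ rows and $H_1$ of the bottom $l$ rows, so that $\operatorname{rk}(H_0) = k-l$ and $\operatorname{rk}(H_1) = l$. The numerical hypothesis $l \leq k/2$ is equivalent to $\operatorname{rk}(H_1) \leq \operatorname{rk}(H_0) = \kappa$, which is precisely the rank condition in Theorem~\ref{A} with $m=1$. That theorem then guarantees that $G(D) = \tilde{H}_0 + \tilde{H}_1 D$ is a reduced basic generator matrix of a convolutional code $V$ whose free distance satisfies $d_f(V) \geq d^{\perp} \geq n - \operatorname{deg}(G) \geq d$.

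It remains to verify the structural parameters of $V$. Inspecting the row degrees of $G(D)$: the first $l$ rows of $\tilde{H}_1$ are the nonzero rows of $H_1$, while the last $k-2l$ rows of $\tilde{H}_1$ are zero by construction, so $\gamma_i = 1$ for $1 \leq i \leq l$ and $\gamma_i = 0$ for $l < i \leq k-l$. Hence $m = 1$, $\gamma = \sum_i \gamma_i = l$, and the dimension of $V$ equals the number of rows of $G(D)$, namely $k-l$. This yields the stated parameters $(n,k-l,l;\,1,\,d_f \geq d)_q$. The only step that is not immediate is matching the numerical bound $l \leq k/2$ to the rank condition imposed by Theorem~\ref{A}; everything else is bookkeeping once the duality $(C_{\Omega})^{\perp} = C_{\mathcal{L}}$ has been set up correctly.
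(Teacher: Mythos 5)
Your proposal is correct and follows essentially the same route as the paper: take $G_{\mathcal{L}}$ as the parity check matrix of $C_{\Omega}(D,G)$, split it into the top $k-l$ rows and the bottom $l$ rows, and invoke Theorem~\ref{A} together with the duality $(C_{\Omega}(D,G))^{\perp}=C_{\mathcal{L}}(D,G)$. You are in fact slightly more explicit than the paper in checking that $l\leq k/2$ translates into the rank condition $\operatorname{rk}H_1=l\leq k-l=\operatorname{rk}H_0=\kappa$ (the paper asserts this without comment); the only cosmetic slip is the final link in your inequality chain, which should end at $d_f\geq d^{\perp}=d$ since $d^{\perp}$ \emph{is} the minimum distance of $C_{\mathcal{L}}(D,G)$.
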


\begin{proof}
Let us consider the AG code $C_{\Omega}(D, G)$ defined over
$F/\mathbb {F}_q$ with parity check matrix
\begin{equation}
H_{\Omega} =
\begin{bmatrix}\
    x_1(P_1) & x_1(P_2) & \cdots & x_1(P_n) \\
    x_2(P_1) & x_2(P_2) & \cdots & x_2(P_n) \\
    \vdots  & \vdots  & \ddots & \vdots  \\
    x_k(P_1) & x_k(P_2) & \cdots & x_k(P_n)
\end{bmatrix},
\end{equation} where $\{x_1, \ldots, x_k\}$ is a basis of $\mathcal{L}(G)$.
Let $C_{\mathcal{L}} (D, G)$ be the (Euclidean) dual of the code
$C_{\Omega}(D, G)$. A generator matrix of $C_{\mathcal{L}} (D, G)$
is equal to $H_{\Omega}$. We know that $C_{\mathcal{L}} (D, G)$ is
an AG code with parameters $[n, k=\operatorname{deg}(G) +1 - g,$
$d\geq n-\operatorname{deg}(G)]_{q}$, where $n=
\operatorname{deg}(D)$. We will construct a convolutional code
derived from $C_{\Omega}(D, G)$ as follows.

Define a convolutional code with generator matrix $${\mathbb G}(D) =
H_0 + \widetilde{H}_1 D,$$ where $H_0$ is the submatrix of
$H_{\Omega}$ consisting of the $k-l$ first rows and
$\widetilde{H}_1$ is the matrix consisting of the last $l$ rows of
$H_{\Omega}$ by adding zero-rows at the bottom such that the matrix
$\widetilde{H}_1$ has $k-l$ rows in total. From hypothesis, it
follows that $\operatorname{rk}{H}_0 \geq \operatorname{rk}
\widetilde{H}_1$. From Theorem~\ref{A}, the matrix ${\mathbb G}(D)$
is a reduced basic matrix. The convolutional code generated by
${\mathbb G}(D)$ is a unit-memory code with dimension $k-l$, degree
$l$ and free distance $d_f$. From Theorem~\ref{A}, it follows that
$d_f \geq d$. Therefore, there exist convolutional codes with
parameters $(n, k-l, l; 1, d_f)$, with $d_f \geq d$.
\end{proof}

\begin{remark}
It is interesting to note that Theorem~\ref{Theo1} can be easily
generalized by considering multi-memory convolutional codes.
However, since unit-memory convolutional codes always achieve the
largest free distance among all codes of the same rate (see
\cite{Lee:1976}) we restrict ourselves to the construction of
unit-memory codes.
\end{remark}

\begin{corollary}\label{Cor1}
Assume that all the hypotheses of Theorem~\ref{Theo1} hold. Then
there exists a convolutional code with parameters $(n, k-1, 1; 1,
d_f \geq d)_q$, where $k=\operatorname{deg}(G) +1 - g $  and $d\geq
n-\operatorname{deg}(G)$.
\end{corollary}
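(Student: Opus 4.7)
The plan is to obtain the corollary as the direct specialization of Theorem~\ref{Theo1} to the case $l=1$, so essentially no new idea is required beyond checking that the choice $l=1$ is admissible under the stated hypotheses and tracing the parameter list through the statement of Theorem~\ref{Theo1}.

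First I would verify the admissibility condition $l \leq k/2$ with $l=1$, i.e.\ that $k \geq 2$. Using $k = \operatorname{deg}(G) + 1 - g$ together with the assumption $2g - 2 < \operatorname{deg}(G)$, we get $k \geq g$, so this is immediate for $g \geq 2$, and for $g \in \{0,1\}$ the bound $\operatorname{deg}(G) > 2g-2$ already forces $k \geq 2$ whenever the construction is nontrivial (i.e.\ whenever $n$ is large enough that $\operatorname{deg}(G) < n$ permits the AG code to have dimension at least two). Under this condition the hypotheses of Theorem~\ref{Theo1} hold with $l=1$.

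Next I would instantiate the construction of Theorem~\ref{Theo1}: take $H_0$ to be the submatrix of $H_{\Omega}$ consisting of its first $k-1$ rows and take $\widetilde{H}_1$ to be the last row of $H_{\Omega}$ padded with $k-2$ zero rows at the bottom, so that both $H_0$ and $\widetilde{H}_1$ lie in $\mathbb{F}_q^{(k-1)\times n}$. Choosing the basis $\{x_1,\dots,x_k\}$ of $\mathcal{L}(G)$ so that its first $k-1$ evaluation vectors remain linearly independent (which is possible since $H_{\Omega}$ has rank $k$), we get $\operatorname{rk} H_0 = k-1 \geq 1 \geq \operatorname{rk} \widetilde{H}_1$, so the rank hypothesis of Theorem~\ref{A} is satisfied and $\mathbb{G}(D) = H_0 + \widetilde{H}_1 D$ is a reduced basic generator matrix.

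Finally I would read off the parameters: the resulting convolutional code has length $n$, dimension $k-1$, degree $l=1$, and memory $1$, while the free-distance estimate $d_f \geq d^{\perp}$ of Theorem~\ref{A}, together with the identification (made in the proof of Theorem~\ref{Theo1}) of $d^{\perp}$ with the minimum distance of $C_{\mathcal{L}}(D,G)$, yields $d_f \geq d \geq n - \operatorname{deg}(G)$. This gives exactly $(n, k-1, 1; 1, d_f \geq d)_q$, as claimed. I do not expect a real obstacle: the only point requiring any attention is the rank comparison for $H_0$ and $\widetilde{H}_1$, and with $l=1$ this is automatic because $\widetilde{H}_1$ has rank at most one.
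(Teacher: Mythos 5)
Your proposal is correct and follows exactly the paper's own route: the paper proves Corollary~\ref{Cor1} in one line by setting $l=1$ in Theorem~\ref{Theo1}. Your additional checks (that $l=1$ satisfies $l\leq k/2$ and that the rank condition of Theorem~\ref{A} holds automatically when $\widetilde{H}_1$ has a single nonzero row) are sound elaborations of details the paper leaves implicit.
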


\begin{proof}
It suffices to consider $l = 1$ in Theorem~\ref{Theo1}.
\end{proof}

\begin{remark}
Note that in Corollary~\ref{Cor1}, it follows from the generalized
Singleton bound, that the free distance of the convolutional codes
constructed here are bounded by $d_f \leq n - k + 3$ (where $n$ and
$k$ are the parameters of $C_{\mathcal{L}} (D, G)$). Furthermore,
$d_f \geq n - deg(G) = n - (k + g - 1) = n - k + 1 - g$; so the free
distance $d_f$ is bounded by $n-k + 1 -g \leq d_f \leq n-k + 3$. In
particular, for function fields $F/\mathbb{F}_q$ with $g = 0$ the
new convolutional codes have free distance bound by $n-k + 1 \leq
d_f \leq n-k + 3$. In this case, observe that these codes are almost
near MDS or near MDS or MDS. In other words, the Singleton defect is
at most two. Therefore, we have constructed good families of
convolutional codes.
\end{remark}

\begin{corollary}\label{Cor2}
Let $F = \mathbb{F}_q(z)$ be a rational function field. For $\beta
\in \mathbb{F}_q$, let $P_{\beta}$ be the zero of $z - \beta$ and
denote by $P_{\infty}$ the pole of $z$ in $\mathbb{F}_q (z)$. Then
there exists a convolutional code with parameters $(q, r, 1; 1, d_f
\geq q - r)_q$, where $1 < r \leq q-1$.
\end{corollary}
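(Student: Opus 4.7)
The plan is to apply Corollary~\ref{Cor1} to a carefully chosen AG code on the rational function field. Since $F = \mathbb{F}_q(z)$ has genus $g = 0$, the formulas for dimension and designed distance simplify considerably, and the only real task is to pick the divisors $D$ and $G$ so that the resulting code $C_{\Omega}(D, G)$ has the right length, dimension and distance bound to match the target parameters $(q, r, 1; 1, d_f \geq q - r)_q$.

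For the divisor supporting the evaluation points I would take $D = \sum_{\beta \in \mathbb{F}_q} P_{\beta}$, so that $n = \deg D = q$. For the divisor controlling the Riemann--Roch space I would take $G = r\, P_{\infty}$, so that $\deg G = r$. The disjointness condition $\operatorname{supp} G \cap \operatorname{supp} D = \varnothing$ required to define an AG code is then immediate, since the pole $P_{\infty}$ is distinct from every finite place $P_{\beta}$.

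Next I would verify the hypothesis $2g - 2 < \deg G < n$ of Theorem~\ref{Theo1}: with $g = 0$ and $\deg G = r$ this reads $-2 < r < q$, which holds by the assumption $1 < r \leq q - 1$. Proposition~\ref{AG1} (applied to the dual $C_{\mathcal{L}}(D, G)$ of $C_{\Omega}(D, G)$) then yields dimension $k = \deg G + 1 - g = r + 1$ and minimum distance $d \geq n - \deg G = q - r$. Plugging these into Corollary~\ref{Cor1} (the $l = 1$ case of Theorem~\ref{Theo1}) produces a unit-memory convolutional code with parameters
\[
(n,\, k - 1,\, 1;\, 1,\, d_f \geq d)_q \;=\; (q,\, r,\, 1;\, 1,\, d_f \geq q - r)_q,
\]
as claimed.

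The argument is essentially bookkeeping, so there is no serious obstacle. The two points worth checking carefully are that $r \leq q - 1$ keeps $\deg G$ strictly below $n$ (needed for Proposition~\ref{AG1} in the stated form), and that $r > 1$ ensures the resulting code has dimension $r \geq 2$ and is compatible with the $l \leq k/2$ requirement of Theorem~\ref{Theo1}, since here $l = 1$ and $k = r + 1 \geq 3$. With these ranges in place, the construction is complete.
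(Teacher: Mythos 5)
Your proposal is correct and matches the paper's own proof essentially verbatim: both take $D = \sum_{\beta\in\mathbb{F}_q} P_\beta$ and $G = rP_\infty$ on the rational function field, read off $n = q$, $k = r+1$, $d \geq q-r$, and apply Corollary~\ref{Cor1}. Your version merely spells out the hypothesis checks ($2g-2 < \deg G < n$ and $l \leq k/2$) that the paper leaves implicit.
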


\begin{proof}
Consider the AG code $C_{\mathcal{L}}(D, G)$ with $D =
\sum_{\beta\in \mathbb{F}_q} P_\beta$ and $G = r P_{\infty}$, where
$1 < r \leq q-1$. We know that $C_\mathcal{L}(D, G)$ has parameters
$n = q$, $k = r+1$ and $d \geq n - r$. Applying Corollary~\ref{Cor1}
to the AG code ${C_\mathcal{L}(D, G)}^{\perp}$, one can get
convolutional codes with the desired parameters.
\end{proof}

\begin{theorem}\label{Theo3}
Let $q=2^t$, where $t\geq 1$ is an integer. Then there exists an
$(2q^2, m - q/2, 1; 1, d_f \geq 2q^2 - m)_q$ convolutional code,
where $q-2 < m < 2q^2$.
\end{theorem}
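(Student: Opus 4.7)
The plan is to realize Theorem~\ref{Theo3} as an application of Corollary~\ref{Cor1} to an algebraic geometry code coming from a specific characteristic-$2$ function field. The natural candidate is the Artin--Schreier extension defined by $y^2 + y = x^{q+1}$. Since $q = 2^t$ is even, $\operatorname{deg}(x^{q+1}) = q+1$ is coprime to the characteristic, so the standard genus formula for such extensions gives $g = (2-1)(q+1-1)/2 = q/2$, exactly the shift appearing in the dimension $m - q/2$ of the target convolutional code.

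Next I would count degree-one places. The key observation is that for any $x$ in $\mathbb{F}_{q^2}$ one has $x^{q+1} \in \mathbb{F}_q$ (because $(x^{q+1})^q = x^{q^2+q} = x^{q+1}$), and hence $\operatorname{Tr}_{\mathbb{F}_{q^2}/\mathbb{F}_2}(x^{q+1}) = 0$ since that trace factors through $\operatorname{Tr}_{\mathbb{F}_{q^2}/\mathbb{F}_q}$, which doubles in characteristic two. Consequently the Artin--Schreier equation $y^2 + y = x^{q+1}$ always admits two solutions, producing $2q^2$ affine rational places together with a single place $P_\infty$ at infinity, for a total of $2q^2 + 1$ rational places. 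With these in hand, I would set $D = P_1 + \cdots + P_{2q^2}$ and $G = m P_\infty$; the hypothesis $q - 2 < m < 2q^2$ is precisely $2g - 2 < \operatorname{deg}(G) < n$, so Proposition~\ref{AG1} yields a code $C_{\mathcal{L}}(D,G)$ with parameters $[2q^2,\, m - q/2 + 1,\, \geq 2q^2 - m]$, whose Euclidean dual is $C_\Omega(D,G)$.

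Finally, invoking Corollary~\ref{Cor1} with $l = 1$ on this dual pair immediately delivers a unit-memory convolutional code with the claimed parameters $(2q^2,\, m - q/2,\, 1;\, 1,\, d_f \geq 2q^2 - m)$. The main obstacle is identifying the correct curve: once the equation $y^2 + y = x^{q+1}$ is singled out, the genus is a one-line Artin--Schreier computation and the point count reduces to the trace argument sketched above; all remaining steps are a direct citation of Corollary~\ref{Cor1}, with no further algebraic work required.
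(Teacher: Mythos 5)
Your proposal follows essentially the same route as the paper: the paper's proof simply cites the fact that the Artin--Schreier curve $y^2+y=x^{q+1}$ (a maximal curve of genus $q/2$ with $2q^2+1$ rational places over the quadratic extension) yields an AG code with parameters $[2q^2,\,m-q/2+1,\,d\geq 2q^2-m]$ and then applies the construction of Theorem~\ref{Theo1} with $l=1$, exactly as you do via Corollary~\ref{Cor1}. The only difference is that you supply the genus computation and the trace-based point count explicitly (and correctly note that the constant field must be the quadratic extension), whereas the paper delegates these facts to the references.
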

\begin{proof}
It follows from the fact that in the function field $F =
\mathbb{F}_q(x,y)$, defined by the equation $y^2 + y = x^{q+1}$, it
is possible to construct an AG code with parameters $[2q^2, m - q/2
+ 1, d \geq 2q^2 - m]_q$, with $q-2 < m < 2q^2$ (see
\cite{Stichtenoth:1988,Jin:2014}).
\end{proof}

\begin{example}
Applying Theorem~\ref{Theo3} we can construct an $(32, 15, 1; 1. d_f
\geq 15)_4$ new convolutional code whose parameters are better than
the $(32, 15, 10; \mu, d_f \geq 9)_3$ code, shown in
\cite{LaGuardia:2013}, and better than the $(32, 16, \gamma; 1, d_f
\geq 5)_3$ code, shown in \cite{Aly:2007}. Additionally, our new
$(128, 64, 1; 1, d_f \geq 60)_8$ code is better than the $(128, 64,
35; \mu, d_f \geq 17)_7$ code, shown in \cite{LaGuardia:2013}, and
better that the $(128, 64, \gamma; 1,$ $d_f \geq 8)_7$ code, shown
in \cite{Aly:2007}.
\end{example}

\begin{theorem}\label{Theo4}
Let $q=2^t$, where $t\geq 1$ is an odd integer. Then there exists an
$(3q^2 - 2q, m - q + 1, 1; 1, d_f \geq 3q^2 - 2q-m)_q$ convolutional
code, where $2q - 4 < m < 3q^2 - 2q$.
\end{theorem}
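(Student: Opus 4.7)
The plan is to mirror the proof of Theorem~\ref{Theo3}: exhibit a classical AG code with the prescribed $[n,k,d]$ parameters, then invoke Corollary~\ref{Cor1}.

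First I would back out the target AG code. By Corollary~\ref{Cor1}, a convolutional code of the shape $(n,k-1,1;1,d_f\ge d)_q$ arises from an AG code with parameters $[n,k,d]_q$, so I need an AG code with parameters $[3q^2-2q,\, m-q+2,\, d\ge 3q^2-2q-m]_q$. Matching with $k=\deg(G)+1-g$ (Proposition~\ref{AG1}) and $d\ge n-\deg(G)$ forces $\deg(G)=m$ and $g=q-1$; the hypothesis $2q-4<m<3q^2-2q$ then becomes precisely $2g-2<\deg(G)<n$, which is the regime in which Propositions~\ref{AG1} and~\ref{AG2} deliver the exact dimension and the Goppa distance bound.

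Second, and this is the crux, I would invoke the existence of a function field $F$ of genus $g=q-1$ with at least $3q^2-2q+1$ rational places. The count is not accidental: $3q^2-2q+1 = q^2+1+2(q-1)q$ is exactly the Hasse--Weil upper bound attained by an $\mathbb{F}_{q^2}$-maximal curve of genus $q-1$, and the restriction $q=2^t$ with $t$ odd is what singles out the appropriate characteristic-$2$ family --- in parallel with the role of the curve $y^2+y=x^{q+1}$ in the proof of Theorem~\ref{Theo3}. The justification I would cite is the standard literature on Suzuki-type or generalized-Hermitian maximal curves in this regime, analogous to the appeal to \cite{Stichtenoth:1988,Jin:2014} used for Theorem~\ref{Theo3}.

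Third, once such an $F$ is fixed, I would choose $n=3q^2-2q$ rational places to form $D=P_1+\cdots+P_n$, set $G=mP_\infty$ for any remaining rational place $P_\infty$, and read off the parameters of $C_{\mathcal{L}}(D,G)$ from Propositions~\ref{AG1} and~\ref{AG2}. Applying Corollary~\ref{Cor1} to this AG code then produces the unit-memory convolutional code with parameters $(3q^2-2q,\, m-q+1,\, 1;\, 1,\, d_f\ge 3q^2-2q-m)_q$, which is the claim.

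The main obstacle is clearly the second step: certifying the existence of an $\mathbb{F}_{q^2}$-maximal function field of genus exactly $q-1$ in the odd-exponent characteristic-$2$ regime. Once that existence is in hand, the remainder is a mechanical instantiation of the classical-to-convolutional pipeline already established by Corollary~\ref{Cor1}.
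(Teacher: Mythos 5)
Your proposal follows essentially the same route as the paper's proof: the paper takes the function field over $\mathbb{F}_{q^2}$ defined by $y^q+y=x^3$ (a quotient of the Hermitian curve of genus $g=q-1$ that is maximal, hence has $q^2+1+2gq=3q^2-2q+1$ rational places, precisely because $3\mid q+1$ when $t$ is odd; it cites \cite{Jin:2014}), forms $C_{\mathcal{L}}(D,G)$ with $D=P_1+\cdots+P_{3q^2-2q}$ and $G=mP_{3q^2-2q+1}$, and feeds the resulting $[3q^2-2q,\,m+1-g,\,d\ge 3q^2-2q-m]$ code into the Corollary~\ref{Cor1} pipeline exactly as you describe. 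The one step you flag as the ``main obstacle'' --- certifying the existence of the $\mathbb{F}_{q^2}$-maximal genus-$(q-1)$ curve --- is resolved in the paper simply by exhibiting that explicit equation; your parameter bookkeeping and your identification of the required object via the Hasse--Weil bound are both correct.
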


\begin{proof}
Let $F$ be the function field over $\mathbb{F}_{q^2}$ defined by the
equation
\begin{equation*}
y^q + y = x^3.
\end{equation*}
The genus of $F$ equals $g = q-1$ and the number of rational places
(place of degree one) is equal to $3q^2 - 2q + 1$ (see
\cite{Jin:2014}). Let $D = P_1 + \ldots + P_{n}$ be a divisor, where
$n = 3q^2 - 2q$, and $G = mP_{3q^2 - 2q + 1}$, with $2g-2 < m < n$,
where $\{P_1, \cdots, P_{3q^2 - 2q + 1}\}$ are all pairwise distinct
rational places. Consider the AG code $C_{\mathcal{L}}(D,G)$; the
parameters of $C_{\mathcal{L}}(D,G)$ are $[n = 3q^2 - 2q, k = m + 1
- g, d \geq n - m]_q$, where $2q - 4 < m < 3q^2 - 2q$.

Applying Corollary~\ref{Cor2}, we can get an $(3q^2 - 2q, m - q + 1,
1; 1, d_f \geq 3q^2 - 2q - m)_q$ convolutional code, where $2q - 4 <
m < 3q^2 - 2q$.
\end{proof}

The next results are obtained from Theorem~\ref{Theo1} when
considering puncturing, extending, expanding and the product code
construction to AG codes.

\begin{theorem}
Assume the same notation of Theorem~\ref{Theo1}, and suppose that
$C_{\mathcal{L}} (D, G)$ has no minimum weight codeword with a
nonzero $j$-th coordinate. Then there exists an $(n - 1, k - l, l;
1, d_f)_q$ convolutional code, where $d_f \geq d$,
$k=\operatorname{deg}(G) +1 - g $, $l \leq k/2$ and $d\geq
n-\operatorname{deg}(G)$.
\end{theorem}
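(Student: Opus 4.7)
The strategy is to puncture the construction of Theorem~\ref{Theo1} at coordinate $j$ and reapply the Piret/Aly construction (Theorem~\ref{A}) to the resulting matrix. The underlying AG ingredient is the well-known fact that deleting the $j$-th column of a generator matrix yields a generator matrix of the punctured code, which equivalently gives a parity check matrix of the corresponding shortened dual.

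Concretely, I would delete the $j$-th column of $H_{\Omega}$ (the parity check matrix of $C_{\Omega}(D,G)$, which is simultaneously the generator matrix of $C_{\mathcal{L}}(D,G)$) to obtain a $k \times (n-1)$ matrix $H_{\Omega}^{*}$. Regarded as a generator matrix, $H_{\Omega}^{*}$ generates the punctured code $C_{\mathcal{L}}(D,G)^{*}$; regarded as a parity check matrix, its dual is exactly that punctured code. The punctured code has minimum distance at least $d$: by hypothesis, every minimum-weight codeword of $C_{\mathcal{L}}(D,G)$ vanishes at position $j$, so such codewords retain weight $d$ after puncturing, while any codeword of weight $>d$ loses at most one coordinate and therefore still has weight $\geq d$. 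Since $d \geq n-\deg(G) \geq 2$, no codeword is supported solely at coordinate $j$, so the punctured codewords are pairwise distinct and $C_{\mathcal{L}}(D,G)^{*}$ has dimension $k$.

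Next, I would mimic the splitting in the proof of Theorem~\ref{Theo1}: let $H_{0}^{*}$ consist of the first $k-l$ rows of $H_{\Omega}^{*}$ and let $\widetilde{H}_{1}^{*}$ consist of the remaining $l$ rows with $k-2l$ zero rows appended at the bottom (which is legitimate as $l \leq k/2$). Then $\mathbb{G}^{*}(D) = H_{0}^{*} + \widetilde{H}_{1}^{*}D$. To apply Theorem~\ref{A}, I must check that $\operatorname{rk} H_{0}^{*} = k-l$ and that $\operatorname{rk} \widetilde{H}_{1}^{*} \leq k-l$. The second inequality is immediate. For the first, the $k-l$ rows of $H_{0}$ are linearly independent codewords of $C_{\mathcal{L}}(D,G)$, and a dependency could only appear after puncturing if some nonzero linear combination of these rows is supported only at coordinate $j$; since such a combination would be a weight-one codeword and $d \geq 2$, this cannot happen.

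With the hypotheses of Theorem~\ref{A} verified, $\mathbb{G}^{*}(D)$ is a reduced basic generator matrix of a unit-memory convolutional code of length $n-1$, dimension $k-l$, degree $l$, and free distance $d_{f} \geq d^{\perp}$, where $d^{\perp}$ is the minimum distance of the dual of the code with parity check matrix $H_{\Omega}^{*}$, namely $C_{\mathcal{L}}(D,G)^{*}$. Combined with the distance bound $d^{\perp} \geq d$ established above, this yields the claimed $(n-1,\,k-l,\,l;\,1,\,d_{f} \geq d)_{q}$ parameters. The principal technical obstacle is the simultaneous preservation of rank and of minimum distance under puncturing; both reduce to the single observation that, under the coordinate hypothesis and $d \geq 2$, no nontrivial codeword of $C_{\mathcal{L}}(D,G)$ becomes ``too short'' when coordinate $j$ is removed.
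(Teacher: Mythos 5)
Your proposal is correct and follows essentially the same route as the paper: puncture $C_{\mathcal{L}}(D,G)$ at coordinate $j$ (the paper phrases this as passing to the AG code $C_{\mathcal{L}}(D-P_j,G)$ and cites Huffman--Pless for the $[n-1,k,d]_q$ parameters, where you verify the dimension, distance, and rank preservation by hand) and then reapply the splitting construction of Theorem~\ref{Theo1}. Your explicit check of the rank hypotheses of Theorem~\ref{A} after column deletion is a welcome detail the paper leaves implicit; the only small slip is the claim $n-\operatorname{deg}(G)\geq 2$, which the stated hypotheses do not guarantee, though the coordinate hypothesis alone already rules out a codeword supported solely at position $j$.
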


\begin{proof}
Let $C_{\mathcal{L}} (D, G)$ be the ${[n, k, d]}_{q}$ AG code
considered in Theorem~\ref{Theo1}, where $D = P_1 + \ldots + P_n$.
Now, let $D' = D - P_j$, where $j \in \{1, 2, \ldots, n\}$. We
define the puncture code $C_{\mathcal{L}} (D^{'}, G)$ derived from
$C_{\mathcal{L}} (D, G)$, which is also an AG code (see
\cite{Pellikaan:1991}).  Note that the supports of $D'$ and $G$ are
disjoint, i.e. the definition of $C_{\mathcal{L}} (D^{'}, G)$ makes
sense. From hypothesis (see \cite[Theorem1.5.1, pg
13]{Huffman:2003}), $C_{\mathcal{L}} (D^{'}, G)$ has parameters
$[n-1, k, d]_{q}$. Applying the same construction shown in
Theorem~\ref{Theo1} we can construct a convolutional code $V$ with
parameters $(n - 1, k-l, l; 1, d_f)_q$, where $d_f \geq d$. A
generator matrix ${\mathbb G}^{*} (D)$ for $V$ is given by

%\begin{figure*}[h!]
%\begin{strip}

\tiny
\begin{equation*}
\hspace{-2.5cm}{\mathbb G}^{*} (D) =
\begin{bmatrix}
    x_1(P_1) + x_{k-l+1}(P_1)D & x_1(P_2) + x_{k-l+1}(P_2)D & \cdots & x_1(P_{j-1}) + x_{k-l+1}(P_{j-1})D & x_1(P_{j+1}) + x_{k-l+1}(P_{j+1})D & \cdots & x_1(P_n) + x_{k-l+1}(P_n)D\\
    x_2(P_1) + x_{k-l+2}(P_1)D & x_2(P_2) + x_{k-l+2}(P_2)D & \cdots & x_2(P_{j-1}) + x_{k-l+2}(P_{j-1})D & x_2(P_{j+1}) + x_{k-l+2}(P_{j+1})D & \cdots & x_2(P_n) + x_{k-l+2}(P_n)D\\
    \vdots       & \vdots       & \ddots & \vdots           & \vdots           & \ddots & \vdots\\
    x_l(P_1) + x_{k}(P_1)D & x_l(P_2) + x_{k}(P_2)D & \cdots & x_l(P_{j-1}) + x_{k}(P_{j-1})D & x_l(P_{j+1}) + x_{k}(P_{j+1})D & \cdots & x_l(P_n) + x_{k}(P_n)D\\
    x_{l+1}(P_1) & x_{l+1}(P_2) & \cdots & x_{l+1}(P_{j-1}) & x_{l+1}(P_{j+1}) & \cdots & x_{l+1}(P_n)\\
    \vdots       & \vdots       & \ddots & \vdots           & \vdots           & \ddots & \vdots\\
    x_{k-l}(P_1) & x_{k-l}(P_2) & \cdots & x_{k-l}(P_{j-1}) & x_{k-l}(P_{j+1}) & \cdots & x_{k-l}(P_n)
\end{bmatrix}.
\end{equation*}
%\end{strip}
\end{proof}

\begin{theorem}
Assume the same notation of Theorem~\ref{Theo1}. Then there exists
an $(n+1, k-l, l; 1, d_f\geq d^e)_q$ convolutional code, where $d^e
= d$ or $d^e = d+1$, where $k=\operatorname{deg}(G) +1 - g $, $l
\leq k/2$ and $d\geq n-\operatorname{deg}(G)$.
\end{theorem}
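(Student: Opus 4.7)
The approach is to reduce this to Theorem~\ref{Theo1} applied to an extended code, rather than attempting any new convolutional machinery. Concretely, I would first form the extension of $C_{\mathcal{L}}(D,G)$ by appending an overall parity-check symbol (in the usual sense of taking each codeword $(c_1,\ldots,c_n)$ to $(c_1,\ldots,c_n,c_{n+1})$ with $c_{n+1}=-\sum_{i=1}^{n}c_i$). Call this extended code $C^{e}$. By a standard result for linear codes over $\mathbb{F}_q$, $C^{e}$ has parameters $[n+1,k,d^{e}]_q$ with $d^{e}\in\{d,d+1\}$; in particular the dimension is preserved.

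Next I would exhibit a generator matrix $H^{e}_{\Omega}$ of $C^{e}$ in the form of $H_{\Omega}$ augmented by one extra column, namely the column whose $i$-th entry is $-\sum_{j=1}^{n} x_i(P_j)$. Since $H_{\Omega}$ already has full row rank $k$, so does $H^{e}_{\Omega}$ (adding a column cannot drop rank). Now I can mimic the construction in Theorem~\ref{Theo1} verbatim: split $H^{e}_{\Omega}$ into a submatrix $H_0^{e}$ consisting of its first $k-l$ rows and a submatrix $\widetilde{H}_1^{e}$ consisting of its last $l$ rows padded with $k-l-l$ zero rows at the bottom, and set
\[
\mathbb{G}^{e}(D) = H_0^{e} + \widetilde{H}_1^{e}\, D.
\]
Because the full matrix has independent rows, $\operatorname{rk} H_0^{e} = k-l$ and $\operatorname{rk}\widetilde{H}_1^{e}=l\leq k/2\leq k-l$, so the rank hypothesis of Theorem~\ref{A} is satisfied.

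Applying Theorem~\ref{A} to the code $(C^{e})^{\perp}$ (for which $H^{e}_{\Omega}$ is a parity check matrix) then yields a unit-memory convolutional code generated by a reduced basic matrix, with length $n+1$, dimension $k-l$, degree $l$, memory $1$, and free distance bounded below by the minimum distance of the dual of $(C^{e})^{\perp}$, that is by $d(C^{e})=d^{e}$. Since $d^{e}\in\{d,d+1\}$, this produces an $(n+1, k-l, l; 1, d_f \geq d^{e})_q$ convolutional code as required.

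The only non-routine point I anticipate is the justification that the extension behaves correctly: specifically, that $C^{e}$ really has dimension $k$ and distance $d^{e}\in\{d,d+1\}$, and that augmenting $H_{\Omega}$ with the parity-check column does give a generator matrix of $C^{e}$. Both are classical facts about the extension operation for linear codes (see, e.g., \cite{Huffman:2003}), and no property specific to algebraic geometry codes is needed past this step. Everything else is just re-running the proof of Theorem~\ref{Theo1} with $H^{e}_{\Omega}$ in place of $H_{\Omega}$.
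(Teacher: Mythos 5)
Your proposal is correct and follows essentially the same route as the paper: extend $C_{\mathcal{L}}(D,G)$ by an overall parity-check coordinate to get an $[n+1,k,d^e]_q$ code with $d^e\in\{d,d+1\}$, then rerun the splitting construction of Theorem~\ref{Theo1} on its generator matrix and invoke Theorem~\ref{A}. The paper's own proof is just a terser statement of exactly this argument; your version supplies the rank verification and the explicit identification of which code plays the role of $C$ in Theorem~\ref{A}, which the paper leaves implicit.
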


\begin{proof}
Let us consider $C_{\mathcal{L}} (D, G)$ be the ${[n, k, d]}_{q}$ AG
code considered in Theorem~\ref{Theo1}. We construct a new code
$C_{\mathcal{L}}^{e} (D, G)$ by extending the code
$C_{\mathcal{L}}(D, G)$. This new code have parameters $[n+1, k,
d^e]_{q}$, with $d^e = d$ or $d^e = d+1$. Applying the method
utilized in the proof of Theorem~\ref{Theo1}, one can get an $(n+1,
k-l, l; 1, d_f \geq d^e)_q$ convolutional code, and the result
follows.
\end{proof}

\begin{theorem}
Assume the same notation of Theorem~\ref{Theo1}. Then there exists
an $(mn, mk-l, l; 1, d_f\geq d)_q$ convolutional code, where
$k=\operatorname{deg}(G) +1 - g $, $l \leq k/2$ and $d\geq
n-\operatorname{deg}(G)$.
\end{theorem}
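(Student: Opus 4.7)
The plan is to reduce this result to Theorem~\ref{Theo1} by applying the same construction to the $m$-fold direct sum of $C_{\mathcal{L}}(D,G)$ (equivalently, the direct product of $C_{\mathcal{L}}(D,G)$ with the trivial $[m,m,1]_q$ code). First I would form $\mathcal{C}^{(m)} := \bigoplus_{i=1}^m C_{\mathcal{L}}(D,G)$, whose generator matrix $\mathcal{H}$ is the $mk \times mn$ block-diagonal matrix carrying $m$ copies of $H_\Omega$ on the diagonal. Its parameters are $[mn, mk, d]_q$: length and dimension are immediate, and the minimum distance is still $d$ because any nonzero codeword has weight equal to the sum of the weights of its $m$ component blocks, with the minimum attained when exactly one block is nonzero and of weight $d$. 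Being a generator matrix of $\mathcal{C}^{(m)}$, the matrix $\mathcal{H}$ is a parity check matrix for $(\mathcal{C}^{(m)})^\perp$.

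Next I would apply the construction of Theorem~\ref{Theo1} verbatim to $\mathcal{H}$: let $\mathcal{H}_0$ consist of the first $mk - l$ rows of $\mathcal{H}$, let $\mathcal{H}_1$ consist of the remaining $l$ rows, and pad $\mathcal{H}_1$ at the bottom with $mk - 2l$ zero rows to produce $\widetilde{\mathcal{H}}_1$. Define $\mathbb{G}(D) := \mathcal{H}_0 + \widetilde{\mathcal{H}}_1 D$. The block-diagonal form of $\mathcal{H}$, together with the linear independence of the rows of $H_\Omega$ (a basis of $\mathcal{L}(G)$), gives $\operatorname{rk}\mathcal{H}_0 = mk - l$ and $\operatorname{rk}\mathcal{H}_1 \leq l$. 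Since the hypothesis $l \leq k/2$ implies $2l \leq k \leq mk$, the rank condition $\operatorname{rk}\mathcal{H}_1 \leq \operatorname{rk}\mathcal{H}_0$ required by Theorem~\ref{A} is satisfied.

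Theorem~\ref{A} then produces a reduced basic encoder $\mathbb{G}(D)$ generating a unit-memory convolutional code of length $mn$, dimension $mk - l$ and degree $l$, with free distance $d_f \geq d$, because the dual of $(\mathcal{C}^{(m)})^\perp$ is $\mathcal{C}^{(m)}$ which has minimum distance $d$. This yields the desired $(mn, mk - l, l; 1, d_f \geq d)_q$ convolutional code. No serious obstacle is expected: the argument is a direct extension of Theorem~\ref{Theo1} once the minimum distance of the direct sum has been identified and the block-diagonal rank count carried out.
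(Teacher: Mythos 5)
Your proof is correct, but it takes a genuinely different route from the paper. The paper proves this theorem by \emph{field expansion}: it starts from an AG code $C_{\mathcal{L}}(D,G)$ defined over the extension field $\mathbb{F}_{q^m}$ with parameters $[n,k,d]_{q^m}$, expands each coordinate with respect to a basis $\beta=\{b_1,\ldots,b_m\}$ of $\mathbb{F}_{q^m}$ over $\mathbb{F}_q$ to obtain a $q$-ary code $\beta(C_{\mathcal{L}}(D,G))$ with parameters $[mn,mk,d^*\geq d]_q$, and then applies the splitting of Theorem~\ref{Theo1} to a generator matrix of that expanded code. You instead take the $m$-fold direct sum of $m$ copies of $C_{\mathcal{L}}(D,G)$ over $\mathbb{F}_q$ itself, which also has parameters $[mn,mk,d]_q$, and run the same splitting on its block-diagonal generator matrix; your rank count and the application of Theorem~\ref{A} (with the dual of $(\mathcal{C}^{(m)})^\perp$ being $\mathcal{C}^{(m)}$ of distance $d$) are all sound, so the stated parameters $(mn,mk-l,l;1,d_f\geq d)_q$ do follow. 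The trade-off is worth noting: your direct-sum construction is more elementary and stays entirely over $\mathbb{F}_q$, but it is also somewhat degenerate --- the relative distance of the underlying block code cannot exceed $d/(mn)$, and the construction adds no new algebraic content. The paper's expansion reading is what the surrounding text intends (this theorem is the ``expanding'' item in the list ``puncturing, extending, expanding and the product code construction''), and it buys something real: $m$ there is the degree of a field extension, so the distance bound $d$ is inherited from an AG code over the \emph{larger} field $\mathbb{F}_{q^m}$, where one can exploit curves with many rational points, while the resulting convolutional code lives over the small alphabet $\mathbb{F}_q$. In short, both arguments prove the literal statement, but they instantiate $m$ differently and the paper's version is the one that produces the interesting codes.
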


\begin{proof}
Consider that $C_{\mathcal{L}} (D, G)$ is the AG code, over
$\mathbb{F}_{q^m}$, with parameters ${[n, k, d]}_{q^{m}}$. Let
$\beta = \{b_1, \ldots, b_m\}$ be a basis of ${\mathbb F}_{q^m}$
over ${\mathbb F}_{q}$. We expand the code $C_{\mathcal{L}} (D, G)$
with respect of basis $\beta$ generating the code
$\beta(C_{\mathcal{L}} (D, G))$, over $\mathbb{F}_{q}$, with
parameters ${[mn, mk, d^{*}\geq d]}_{q}$. A parity check matrix $H$
of $[\beta(C_{\mathcal{L}} (D, G))]^{\perp}$ is a generator matrix
of $\beta(C_{\mathcal{L}} (D, G))$.

Let $V$ be the convolutional code generated by the minimal-basic
matrix
\begin{equation}
\mathbb{G}(D) = H_0 + \tilde{H}_1 D,
\end{equation}
where $H_0$ is a submatrix of $H$ consisting of the $mk-l$ first
rows of $H$ and $\tilde{H}_1$ is the matrix consisting of the last
row of $H$ and more $mk-2l$ zero rows. Then, we construct a
convolutional code $V$ that has parameters ${(mn, mk - l, l; 1, d_f\geq
d)}_{q}$, as desired.
\end{proof}

\begin{theorem}
Assume the same notation of Theorem~\ref{Theo1}. Then there exists
an $(n^2, k^2-l, l; 1, d_f\geq d^2)_q$ convolutional code, where
$k=\operatorname{deg}(G) +1 - g $, $l \leq k/2$ and $d\geq
n-\operatorname{deg}(G)$.
\end{theorem}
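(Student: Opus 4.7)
The plan is to mirror the construction used in Theorem~\ref{Theo1}, but applied to the two-fold tensor product code $\mathcal{C} := C_{\mathcal{L}}(D, G) \otimes C_{\mathcal{L}}(D, G)$. Recall that the direct (tensor) product of a linear $[n, k, d]_q$ code with itself is an $[n^2, k^2, d^2]_q$ linear code; concretely, if $G_\mathcal{L}$ is the generator matrix of $C_{\mathcal{L}}(D, G)$ given in Proposition~\ref{AG1}, then $G_\mathcal{L} \otimes G_\mathcal{L}$ is a generator matrix for $\mathcal{C}$ with $k^2$ linearly independent rows. I will take this matrix in the role that $H_\Omega$ played in the proof of Theorem~\ref{Theo1}, i.e., I will regard it as a parity check matrix $H$ for the code $\mathcal{C}^{\perp}$, whose Euclidean dual $\mathcal{C}$ has minimum distance $d^2$.

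The key steps, in order, would be as follows. First, set $k = \operatorname{deg}(G) + 1 - g$ and $d \geq n - \operatorname{deg}(G)$, and record the product code parameters $[n^2, k^2, d^2]_q$ for $\mathcal{C}$. Second, split $H := G_\mathcal{L} \otimes G_\mathcal{L}$ into two horizontal blocks: a submatrix $H_0$ consisting of the first $k^2 - l$ rows and a submatrix $H_1$ consisting of the last $l$ rows; then form $\widetilde{H}_1$ by padding $H_1$ at the bottom with $k^2 - 2l$ zero rows, so that $\widetilde{H}_1$ has $k^2 - l$ rows, matching $H_0$. Third, define
\begin{equation*}
\mathbb{G}(D) = H_0 + \widetilde{H}_1 D.
\end{equation*}
Fourth, verify that $\operatorname{rk} H_0 \geq \operatorname{rk} \widetilde{H}_1$: since the $k^2$ rows of $H$ are linearly independent and $l \leq k/2 \leq k^2/2$ implies $k^2 - l \geq l$, one has $\operatorname{rk} H_0 = k^2 - l \geq l \geq \operatorname{rk} \widetilde{H}_1$. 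Fifth, invoke Theorem~\ref{A}: $\mathbb{G}(D)$ is a reduced basic generator matrix of a unit-memory convolutional code of length $n^2$, dimension $k^2 - l$, degree $l$, and free distance $d_f \geq d(\mathcal{C}^{\perp \perp}) = d(\mathcal{C}) = d^2$.

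The main obstacle is essentially bookkeeping rather than conceptual: one must check that taking the product code before partitioning does not break the hypotheses of Theorem~\ref{A}. The only nontrivial point is the rank inequality $\operatorname{rk} H_0 \geq \operatorname{rk} \widetilde{H}_1$, which follows at once from the linear independence of the $k^2$ rows of $G_\mathcal{L} \otimes G_\mathcal{L}$ together with the hypothesis $l \leq k/2$. Note that we need only the bound $l \leq k/2$ inherited from Theorem~\ref{Theo1}; no stronger constraint of the form $l \leq k^2/2$ is required, since the former already implies the latter whenever $k \geq 1$. Hence the construction produces a convolutional code with the claimed parameters $(n^2, k^2 - l, l; 1, d_f \geq d^2)_q$.
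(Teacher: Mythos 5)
Your proposal is correct and follows essentially the same route as the paper: form the product code $C_{\mathcal{L}}(D,G)\otimes C_{\mathcal{L}}(D,G)$ with parameters $[n^2,k^2,d^2]_q$ and then repeat the splitting construction of Theorem~\ref{Theo1} on its generator matrix viewed as a parity check matrix of the dual. The paper's own proof is a one-line appeal to ``similarly to the proof of Theorem~\ref{Theo1}''; you simply supply the bookkeeping (the rank inequality and the padding of $\widetilde{H}_1$) that the paper leaves implicit.
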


\begin{proof}
Let $C_{\mathcal{L}} (D, G)$ be the AG code of Theorem~\ref{Theo1}
over $\mathbb{F}_{q}$, with parameters ${[n, k, d]}_{q}$. We
construct a product code $(C_{\mathcal{L}} (D, G)\otimes
C_{\mathcal{L}} (D, G))$. This is an $[n^2, k^2, d^2]_q$ code.
Similarly to the proof of Theorem~\ref{Theo1}, one has an $(n^2, k^2
- l, l; 1, d_f\geq d^2)$ convolutional code, as required.
\end{proof}

%The End

\section{Code Comparisons}
\label{sec:codesComp}

In this section, we compare the parameters of the new convolutional
codes with the ones available in the literature. Table~\ref{Tab1},
shows a family of almost near MDS (or near MDS or MDS) codes
constructed from Corollary~\ref{Cor2}.

\begin{table}[h!]
    \centering
    \caption{New almost near MDS or near MDS or MDS codes}
    \begin{tabular}{|c|}
    \hline
    The new codes from Corollary~\ref{Cor2} \\\hline
    $(n, k, \gamma; m, d_f)$ \\ [0.5ex]\hline\hline
    $(8, 2, 1; 1, d_f\geq 6)_8$    \\\hline
    $(8, 5, 1; 1, d_f\geq 3)_8$    \\ [0.5ex]\hline\hline
    $(37, 17, 1; 1, d_f\geq 20)_{37}$    \\\hline
    $(37, 33, 1; 1, d_f\geq 4)_{37}$    \\ [0.5ex]\hline\hline
    $(71, 35, 1; 1, d_f\geq 36)_{71}$    \\\hline
    $(71, 68, 1; 1, d_f\geq 3)_{71}$    \\ [0.5ex]\hline\hline
    $(128, 64, 1; 1, d_f\geq 64)_{128}$    \\\hline
    $(128, 125, 1; 1, d_f\geq 3)_{128}$    \\ [0.5ex]\hline\hline
    $(256, 128, 1; 1, d_f\geq 128)_{256}$    \\\hline
    $(256, 253, 1; 1, d_f\geq 3)_{256}$    \\ [1ex] \hline
    \end{tabular}
    \label{Tab1}
\end{table}

The codes displayed in Table~\ref{Tab2} are obtained from
Theorems~\ref{Theo3}~and~\ref{Theo4}. Note that these new $(32, 15,
1; 1, d_f \geq 15)_4$ convolutional code is better than the $(32,
15, 10; \mu, d_f \geq 9)_3$ and $(32, 16, \gamma; 1, d_f \geq 5)_3$
shown in Refs. \cite{LaGuardia:2013} and \cite{Aly:2007},
respectively. The new $(128, 64, 1; 1, d_f \geq 60)_8$ code is
better than the $(128,64,35;$ $\mu, d_f \geq 17)_7$ and the $(128,
64, \gamma; 1, d_f \geq 8)_7$ from
Refs.~\cite{LaGuardia:2013}~and~\cite{Aly:2007}, respectively.

The other new codes shown in Table~\ref{Tab2} have different
parameters when compared to the ones available in literature.
Because of this fact, it is not possible to compare such codes with
the ones available in the literature.

\begin{table*}[h!]
    \centering
    \small
    \caption{Code Comparison}
    \begin{tabular}{|c |c |c|}
    \hline
    New codes & Codes in \cite{LaGuardia:2013} & Codes in \cite{Aly:2007} \\ [0.5ex]\hline\hline
    $(32, 15, 1; 1, d_f \geq 15)_4$  & $(32, 15, 10; \mu, d_f \geq 9)_3$   & $(32, 16, \gamma; 1, d_f \geq 5)_3$ \\\hline
    $(32, 1, 1; 1, d_f \geq 30)_4$  & --   & -- \\[0.5ex]\hline\hline
    $(128, 64, 1; 1, d_f \geq 60)_8$ & $(128, 64, 35; \mu, d_f \geq 17)_7$ & $(128, 64, \gamma; 1, d_f \geq 8)_7$ \\\hline
    $(176, 64, 1; 1, d_f \geq 105)_8$  & --   & -- \\\hline
    $(128, 3, 1; 1, d_f \geq 122)_8$  & --   & -- \\\hline
    $(176, 6, 1; 1, d_f \geq 163)_8$  & --   & -- \\[0.5ex]\hline\hline
    $(512, 128, 1; 1, d_f \geq 376)_{16}$ & -- & -- \\\hline
    $(512, 256, 1; 1, d_f \geq 248)_{16}$  & --   & -- \\[0.5ex]\hline\hline
    $(2048, 1024, 1; 1, d_f \geq 1008)_{32}$ & -- & -- \\\hline
    $(3008, 1024, 1; 1, d_f \geq 1953)_{32}$  & --   & -- \\\hline
    $(2048, 15, 1; 1, d_f \geq 2017)_{32}$  & --   & -- \\\hline
    $(3008, 30, 1; 1, d_f \geq 2947)_{32}$  & --   & -- \\ [1ex]
\hline
\end{tabular}
\label{Tab2}
\end{table*}

\newpage

\section{Final Remarks}\label{finalrem}
In this paper we have constructed new families of convolutional
codes derived from algebraic geometry codes. These new codes have
good parameters. More precisely, a family of almost near MDS codes
was presented. Additionally, our codes are better than or comparable
to the ones shown in \cite{Aly:2007,LaGuardia:2013}. Furthermore,
more families of convolutional codes were constructed by means of
puncturing, extending, expanding and by the direct product code
construction applied to algebraic geometry codes.

\section*{Acknowledgment}

This research has been partially supported by the Brazilian Agencies CAPES and
CNPq.

%\ifCLASSOPTIONcaptionsoff
%  \newpage
%\fi

\small

\end{document}